\newtheorem{theorem}{Theorem}
\newtheorem{lemma}[theorem]{Lemma}
\newtheorem{fact}[theorem]{Fact}
\theoremstyle{definition}
\newtheorem{definition}{Definition}
\newtheorem{claim}[theorem]{Claim}
\newclass{\alt}{alt}
\newclass{\s}{s}
\newclass{\bs}{bs}
\newclass{\fbs}{fbs}
\newclass{\fC}{fC}
\newclass{\Cert}{C}
\newclass{\EC}{EC}
\newclass{\SB}{sb}
\newclass{\fsb}{fsb}
\newclass{\sC}{sC}
\newclass{\qa}{q_{adv}}
\newclass{\disc}{disc}
\newclass{\bias}{bias}
\newclass{\snip}{snip}
\newclass{\Snip}{Snip}
\newclass{\codim}{codim}
\newclass{\I}{I}
\newclass{\Hen}{H}
\newclass{\Div}{D}
\newclass{\nq}{NOQUERY}
\newclass{\lft}{left}
\newcommand{\cA}{\mathcal{A}}
\newcommand{\clB}{\mathcal{B}}
\newcommand{\clC}{\mathcal{C}}
\newcommand{\clE}{\mathcal{E}}
\newcommand{\clP}{\mathcal{P}}
\newcommand{\clN}{\mathcal{N}}
\newcommand{\N}{\mathsf{N}}
\newcommand{\email}[1]{\href{mailto:#1}{#1}}
\newcommand{\cL}{\mathcal{L}}
\title{A Composition Theorem via Conflict Complexity}
\author{Swagato Sanyal \thanks{Division of Mathematical Sciences, Nanyang Technological University, Singapore and Centre for Quantum Technologies, National University of Singapore, Singapore. \email{ssanyal@ntu.edu.sg}}}
\begin{document}
\maketitle
\begin{abstract}
Let $\R(\cdot)$ stand for the bounded-error randomized query complexity. We show that for any relation $f \subseteq \{0,1\}^n \times \mathcal{S}$ and partial Boolean function $g \subseteq \{0,1\}^n \times \{0,1\}$, $\R_{1/3}(f \circ g^n) = \Omega(\R_{4/9}(f) \cdot \sqrt{\R_{1/3}(g)})$. Independently of us, Gavinsky, Lee and Santha \cite{newcomp} proved this result. By an example demonstrated in their work, this bound is optimal. We prove our result by introducing a novel complexity measure called the \emph{conflict complexity} of a partial Boolean function $g$, denoted by $\chi(g)$, which may be of independent interest. We show that $\chi(g) = \Omega(\sqrt{\R(g)})$ and $\R(f \circ g^n) = \Omega(\R(f) \cdot \chi(g))$.
\end{abstract}
\section{Introduction}
\label{intro}
Let $f \subseteq\{0,1\}^n \times\mathcal{S}$ be a relation and $g \subseteq \{0,1\}^m \times \{0,1\}$ be a partial Boolean function.  In this work, we bound the bounded-error randomized query complexity of the composed relation $f \circ g^n$ from below in terms of the bounded-error query complexitites of $f$ and $g$. Our main theorem is as follows.
\begin{restatable}[Main Theorem]{thm}{main}
\label{main}
For any relation $f \subseteq \{0,1\}^n \times \mathcal{S}$ and partial Boolean function $g \subseteq \{0,1\}^n \times \{0,1\}$,
\[\R_{1/3}(f \circ g^n) = \Omega\left(\R_{4/9}(f) \cdot \sqrt{\R_{1/3}(g)}\right).\]
\end{restatable}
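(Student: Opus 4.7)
The plan is to follow the two-step strategy announced in the abstract: introduce a new complexity measure, the \emph{conflict complexity} $\chi(g)$ of a partial Boolean function $g$, and then establish the two intermediate bounds
\[
\chi(g) \;=\; \Omega\!\left(\sqrt{\R_{1/3}(g)}\right) \qquad \text{and} \qquad \R_{1/3}(f \circ g^n) \;=\; \Omega\!\left(\R_{4/9}(f)\cdot \chi(g)\right),
\]
from which Theorem~\ref{main} is immediate by substitution. The crux is to design $\chi(g)$ so that it simultaneously (a) lower-bounds the randomized complexity of $g$ via $\sqrt{R}$, and (b) upper-bounds the per-block cost we must pay in a simulation-style composition argument.

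The intended definition of $\chi(g)$ would capture the following game: two parties, each knowing a $0$-input and a $1$-input of $g$ respectively, adaptively reveal bits to a common querier until the transcripts become inconsistent (a ``conflict'' is forced); $\chi(g)$ is the minimum expected number of queries needed, over all strategies, to force such a conflict in the worst case over the two hidden inputs. This is in the spirit of standard adversary-style measures. To prove $\chi(g) \geq \Omega(\sqrt{\R_{1/3}(g)})$, I would take a conflict-forcing strategy for $g$ achieving cost $\chi(g)$ and, by a Yao-style / distributional argument, boost it into a bounded-error randomized decider for $g$. The standard trick here is a boosting/amplification step that costs a $\chi(g)$ factor — running the conflict game against a random hidden input of the opposite label and declaring a bit based on which side first runs into inconsistency — giving an algorithm for $g$ of cost $O(\chi(g)^2)$, which rearranges to $\chi(g) = \Omega(\sqrt{\R(g)})$.

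For the composition lower bound $\R(f\circ g^n) = \Omega(\R(f)\cdot \chi(g))$, the plan is a simulation argument: given a randomized decision tree $T$ for $f\circ g^n$ of cost $C$, I want to build a randomized decision tree for $f$ of cost $O(C / \chi(g))$. The construction proceeds block-by-block: to simulate a query to the $i$-th input bit $z_i$ of $f$, I use the restriction of $T$ to block $i$ to \emph{conflict-test} the hidden block. Concretely, I maintain for each block $i$ a distribution over $0$-inputs and $1$-inputs of $g$ that is consistent with the answers $T$ has seen so far on that block; as long as both sides survive, the block's answer is ambiguous, but by the definition of $\chi(g)$, forcing one side to die (thereby resolving $z_i$) requires that $T$ spend $\Omega(\chi(g))$ queries on block $i$ in expectation. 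Summing over blocks and using that $T$ makes $C$ queries total yields at most $O(C/\chi(g))$ ``resolved'' blocks, which must suffice to solve $f$ with slightly degraded error (hence $4/9$ instead of $1/3$) by a Markov / error-budget argument.

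The main obstacle is the composition step, specifically showing that the bookkeeping in the simulation is tight enough: I must choose the hidden $g$-inputs in each block from a carefully chosen hard distribution, argue that conditioned on $T$'s transcript the blocks remain essentially independent, and control the error inflation when I convert the conflict-game cost into an honest decision-tree cost for $f$. A clean way to handle the independence issue is to fix the random coins of $T$ and view its interaction with each block as a (correlated) adaptive conflict-forcer; then an averaging argument over blocks together with the minimax definition of $\chi(g)$ gives the $\Omega(\chi(g))$-per-resolved-block lower bound. Once this is in place, combining the two lemmas yields
\[
\R_{1/3}(f\circ g^n) \;=\; \Omega\!\left(\R_{4/9}(f)\cdot \chi(g)\right) \;=\; \Omega\!\left(\R_{4/9}(f)\cdot \sqrt{\R_{1/3}(g)}\right),
\]
which is the claimed bound.
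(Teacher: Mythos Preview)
Your high-level decomposition into the two lemmas $\chi(g)=\Omega(\sqrt{\R_{1/3}(g)})$ and $\R_{1/3}(f\circ g^n)=\Omega(\R_{4/9}(f)\cdot\chi(g))$ is exactly the paper's strategy, and your sketch of the composition step (simulate a tree for $f\circ g^n$, query $z_i$ only when block $i$ is ``forced,'' charge each forced block $\Omega(\chi(g))$ queries) is close to what the paper does. The paper's accounting is a clean trick worth noting: rather than averaging per block, it appends to the simulation an optimal conflict tree on every still-unresolved block, observes that the combined process must spend at least $n\cdot\chi(g)$ total queries, and reads off the bound on the expected number of resolved blocks from that.

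Where your plan has a genuine gap is in the definition of $\chi(g)$ and the proof of $\chi(g)=\Omega(\sqrt{\R(g)})$. As you have described it---two parties holding a fixed $0$-input and a fixed $1$-input, revealing bits until they disagree---$\chi(g)$ is essentially the sabotage complexity of Ben-David and Kothari, and the paper explicitly points out that for \emph{partial} functions this can be unboundedly smaller than $\R(g)$. The fix the paper uses is crucial: $\chi(g)$ is defined via a pair of \emph{distributions} $\mu_0,\mu_1$ on $g^{-1}(0),g^{-1}(1)$, and the ``conflict'' at a node $v$ is not a literal disagreement but a random event that occurs with probability equal to the statistical gap $\Delta^{(v)}=\lvert\Pr_{\mu_0\mid v}[x_j{=}0]-\Pr_{\mu_1\mid v}[x_j{=}0]\rvert$. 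This probabilistic conflict is symmetric in the two labels, so your proposed decider---``output whichever side runs into inconsistency first''---does not make sense in this setting: when a conflict fires, it carries no information about which label is correct.

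The paper's proof of $\chi(g)=\Omega(\sqrt{\R(g)})$ is accordingly quite different from your sketch and is where the real work lies. One runs the optimal conflict tree $\clB$ for $O(\chi(g)^2)$ steps and outputs the majority label at the reached node. The analysis goes: Markov on $\E[\clN]\le\chi(g)$ gives that the first $O(\chi(g))$ queries accumulate $\sum_t\E[\Delta^{(v_t)}]=\Omega(1)$; the optimal tree has a recursive substructure, so repeating this $\chi(g)$ times yields $\sum_{t\le O(\chi(g)^2)}\E[\Delta^{(v_t)}]=\Omega(\chi(g))$; finally, Pinsker plus Cauchy--Schwarz convert this $\ell_1$ sum of gaps into $\Omega(1)$ bits of mutual information between the transcript and $g(x)$, which is enough to beat error $1/2$. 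Your ``standard boosting trick'' hand-wave does not capture this chain, and without the distributional definition and the information-theoretic step the square-root bound does not go through for partial $g$.
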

Prior to this work, Anshu et. al. \cite{fstcomp} proved that $\R_{1/3}(f\circ g^n) = \Omega(\R_{4/9}(f)\cdot\R_{1/2-1/n^4}(g))$. Although in the statement of their result $g$ is stated to be a Boolean function, their result holds even when $g$ is a partial Boolean function.

In the special case of $g$ being a total Boolean function, Ben-David and Kothari \cite{DBLP:conf/icalp/Ben-DavidK16} showed that $\R(f\circ g^n) = \Omega\left(\R(f)\cdot \sqrt{\frac{\R(g)}{\log \R(g)}}\right)$.

Gavinsky, Lee and Santha \cite{newcomp} independently proved Theorem~\ref{main} (possibly with different values for the error parameters). They also prove this bound to be tight by exhibiting an example that matches this bound. We believe that our proof is sufficiently different and significantly shorter and simpler than theirs. We draw on and refine the ideas developed in the works of Anshu et. al. and Ben-David and Kothari to prove our result.

We define a novel measure of complexity of a partial Boolean function $g$ that we refer to as the \emph{conflict complexity} of $g$, denoted by $\chi(g)$ (see Section~\ref{cc} for a definition). This quantity is inspired by the \emph{Sabotage complexity} introduced by ben-David and Kothari. However, the two measures also have important differences. For example, we could show that for any partial function $g$, $\chi(g)$ and $\R(g)$ are related as follows.
\begin{restatable}{thm}{maina}
\label{maina}
For any partial Boolean function $g \subseteq \{0,1\}^n \times \{0,1\}$,
\[\chi(g)=\Omega\left(\sqrt{\R_{1/3}(g)}\right).\]
\end{restatable}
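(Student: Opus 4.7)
The plan is to prove the equivalent upper bound $\R_{1/3}(g) = O(\chi(g)^2)$, since then $\chi(g) = \Omega(\sqrt{\R_{1/3}(g)})$ follows immediately. Given a randomized protocol $P$ witnessing $\chi(g) \le c$, I would construct a bounded-error randomized query algorithm $A$ for $g$ that makes $O(c^2)$ queries to its input. This parallels the $\R(g) = O(\SB(g)^2)$ reduction of Ben-David and Kothari for sabotage complexity, and should follow the same template, modified to exploit whatever features distinguish $\chi$ from $\SB$.

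By Markov's inequality I may first assume $P$ runs in worst-case $O(c)$ queries with constant success probability. The algorithm $A$ on input $x$ then proceeds in $\Theta(c)$ independent rounds. In each round, $A$ samples fresh auxiliary randomness — a candidate ``partner'' input drawn from the hard distribution for $\chi(g)$ — packages $(x,\text{partner})$ into an instance of the conflict-finding problem, and simulates $P$ on it, answering $P$'s queries about $x$ with real queries to the oracle while answering queries about the partner from memory. Whenever $P$ outputs a conflict coordinate $i$, $A$ uses the queried bit $x_i$ together with the pretended partner value to update its posterior belief about $g(x)$. After $\Theta(c)$ rounds, $A$ outputs whichever value in $\{0,1\}$ has higher posterior probability.

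The analysis has two key components. A \emph{soundness} lemma must show that, even when $A$ has packaged $x$ with a partner on the wrong side of $g$, the conflict coordinates returned by $P$ remain informative, i.e., the observed bit $x_i$ correlates with $g(x)$ in a controlled way. A \emph{progress} lemma must then quantify that each round contributes $\Omega(1)$ bits of information about $g(x)$, so $\Theta(c)$ rounds drive the error below $1/3$. The main obstacle I anticipate is the soundness step: one must argue that the simulated distribution on $P$'s input stays close enough to the genuine $\chi$-distribution, as seen through $P$'s bounded query window, that $P$'s output remains trustworthy. This will likely require a coupling argument tailored to the exact definition of $\chi$ in Section~\ref{cc}, together with a stopping-time analysis showing that the simulation only deviates once $P$ has already produced a useful coordinate.
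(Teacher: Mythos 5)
There is a genuine gap, and it starts from a misreading of what $\chi(g)$ is. You treat conflict complexity as if it were witnessed by a ``conflict-finding protocol $P$'' that, given a pair of inputs on opposite sides of $g$, outputs a coordinate $i$ where they differ --- the way sabotage complexity is witnessed by an algorithm finding a sabotaged bit. That is not the definition in Section~\ref{cc}. Conflict complexity is a max-min quantity: $\chi(g)=\max_{\mu_0,\mu_1}\min_{\clB}\E[\clN]$, where $\clB$ is an \emph{ordinary decision tree that computes $g$}, and $\clN$ is the first iteration at which the process $\clP$ running on $\clB$ is forced to ``commit'' to a side (i.e., the conditional next-bit distributions under $\mu_0\mid v$ and $\mu_1\mid v$ disagree and the random threshold $r$ lands in the gap). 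Nothing is output, no coordinate is certified, and there is no two-input object to simulate. Consequently your two planned lemmas do not fit: there is no ``partner input'' to resample, no simulated distribution to couple to a real one, and the ``soundness'' obstacle you worry about --- that for partial $g$ the simulated input might drift off the promise, which is exactly why the BDK $\SB$-argument breaks for partial functions --- never arises in the correct argument and cannot be fixed within your framing.

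The argument that does work is distributional and information-theoretic, not simulation-based. Apply the minimax principle (Fact~\ref{minmax}) to reduce to bounding $\D^\mu_\epsilon(g)$ for an arbitrary $\mu$ on valid inputs; split $\mu$ into $\mu_0,\mu_1$; take $\clB$ to be the \emph{optimal} decision tree for this pair, so $\E[\clN]\le d:=\chi(g)$. The candidate algorithm is simply: run $\clB$ for $10d^2$ queries, then output the majority label of the current subcube. There are no rounds and no fresh auxiliary randomness. The progress measure is the total variation gap $\Delta^{(v)}$ between the two conditional next-bit distributions at the visited vertices; Markov's inequality on $\E[\clN]\le d$ gives $\sum_{t\le 10d}\E[\Delta^{(v_t)}]\ge\Omega(1)$ (Claim~\ref{sumofdelta}), and the recursive optimality of subtrees of $\clB$ (the substructure property stated at the end of Section~\ref{cc}) boosts this to $\sum_{t\le 10d^2}\E[\Delta^{(v_t)}]\ge\Omega(d)$ (Lemma~\ref{sumofdelta2}). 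Claim~\ref{mutin} (Pinsker's inequality) converts the $\Delta$'s into mutual information with $g(x)$, and Cauchy--Schwarz then shows $\I(a_1,\dots,a_{10d^2}:g(x))=\Omega(1)$, after which an entropy/Markov argument finishes. Your high-level slogan --- accumulate $\Omega(1)$ information over each block of $O(d)$ queries, for $O(d)$ blocks --- is the right shape, but to realize it you need the minimax reduction, the optimal-tree substructure, and the Pinsker step, none of which appear in your proposal, and you need to abandon the partner-resampling simulation entirely.
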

See Section~\ref{cc} for a  proof of Throrem~\ref{maina}. Sabotage complexity is known to be similarly related to the bounded-error randomized query complexity (up to a logarithmic factor) when $g$ is a total Boolean function. For partial Boolean functions, unbounded separation is possible between sabotage complexity and $\R(\cdot)$.

We next prove the following composition theorem.
\begin{restatable}{thm}{mainb}
\label{mainb}
Let $\mathcal{S}$ be an arbitrary set, $f \subseteq \{0,1\}^n \times \mathcal{S}$ be a relation and $g \subseteq \{0,1\}^m \times \{0,1\}$ be a partial Boolean function. Then,
\[\R_{1/3}(f \circ g^n)=\Omega(\R_{4/9}(f) \cdot \chi(g)).\]
\end{restatable}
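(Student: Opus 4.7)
The plan is to reduce any $\R_{1/3}$-algorithm $A$ for $f \circ g^n$ of cost $T$ to an $\R_{4/9}$-algorithm $A'$ for $f$ of cost $O(T/\chi(g))$, following the simulation template of Anshu et~al.\ and Ben-David--Kothari but with $\chi(g)$ serving as the per-coordinate ``query amplification'' factor. On input $z \in \{0,1\}^n$, the algorithm $A'$ simulates $A$ on a virtual composed input $x = (x_1,\ldots,x_n)$. For every coordinate $i$ whose value $z_i$ has not yet been queried, $A'$ maintains a ``conflict'' distribution over $\mathrm{dom}(g)$ that is simultaneously compatible with $g(x_i)=0$ and $g(x_i)=1$. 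Each query of $A$ to $x_i$ is answered by sampling from this distribution conditioned on previous answers, until the transcript is forced out of compatibility with one of the two branches; at that moment $A'$ spends one query on $z_i$ and commits to the branch consistent with the returned value.

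The definition of $\chi(g)$ from Section~\ref{cc} is tailored so that such conflict distributions exist with the guarantee that no randomized strategy can, in fewer than $\chi(g)$ queries in expectation, drive the transcript for a single coordinate to be inconsistent with both $g(x_i)=0$ and $g(x_i)=1$. Since $A$ makes $T$ queries in total and each commit by $A'$ is preceded by $\Omega(\chi(g))$ simulator-answered queries on the corresponding $x_i$, the expected total number of queries $A'$ makes to $z$ is $O(T/\chi(g))$. A worst-case bound follows by aborting if this count exceeds a constant times its expectation, at a small additive cost in error.

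Correctness will be established by a hybrid argument that, one coordinate at a time, replaces the simulator's virtual execution at coordinate $i$ by an honest execution on an input $x_i$ drawn from a distribution with $g(x_i)=z_i$. The per-coordinate indistinguishability property that is built into the definition of $\chi(g)$ bounds the statistical distance introduced at each hybrid step; summing over $n$ coordinates and calibrating the constants appropriately keeps the additional error below $1/9$, so the final error grows only from $1/3$ to at most $4/9$.

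The hardest part will be this hybrid analysis. The simulator's per-coordinate distribution evolves by conditioning on a growing transcript, and one must control its total-variation distance from the posterior distribution of a true $g$-input conditioned on the same transcript under the promise $g(x_i)=z_i$. Ensuring that these distances close additively across coordinates is exactly what the definition of $\chi(g)$ is designed to support, which is why $\chi$ is framed in terms of an adversarial conflict distribution rather than a more naive query-amplification quantity, and why the proof benefits from $\chi$ being available as a self-contained intermediate measure rather than trying to argue directly in terms of $\sqrt{\R(g)}$.
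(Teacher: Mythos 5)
Your high-level architecture matches the paper's: simulate the composed algorithm $\cA'$ on a virtual input, answer per-coordinate queries from a distribution compatible with both $g$-values, and query $z_i$ only when the two branches ``conflict.'' Two core steps, however, diverge from the paper in ways that would leave genuine gaps if carried out as you describe.

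First, the correctness argument. You plan a hybrid argument that bounds the total-variation distance between the simulator's answers and honest answers, with per-coordinate error accumulating to at most $1/9$ over $n$ coordinates. This is not how the paper proceeds, and if you go that way you will likely lose the theorem's strength. The paper's simulation (process $\clP$ / Algorithm~\ref{T}) uses a coupling through a single uniform real $r$: it always falls to the branch $v_0$ when $r$ lies below $\min_b \Pr_{\mu_b}[x_i^{(j)}=0\mid v_i]$, always falls to $v_1$ when $r$ is above $\max_b$, and only in the intermediate interval does it query $z_i$ and follow $\mu_{z_i}$. Claim~\ref{samedistn} shows by induction on depth that this coupling is \emph{exact}: the transcript of $T$ on $z$ is distributed identically to the transcript of $\cA'$ on a true sample from $\gamma_z$. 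There is no statistical error from simulation; the entire $1/3 \to 4/9$ loss comes from Markov truncation of the query count. If instead you tolerate per-coordinate TV error, you will need that error to be $O(1/n)$, which is precisely the regime of the earlier Anshu et~al.\ bound $\Omega(\R(f)\cdot\R_{1/2-1/n^4}(g))$ — weaker than what the theorem claims.

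Second, your claim that ``no randomized strategy can, in fewer than $\chi(g)$ queries in expectation, drive the transcript for a single coordinate to conflict'' overstates what $\chi(g)$ guarantees. The definition of $\chi(g)$ minimizes $\E[\clN]$ only over decision trees $\clB$ that \emph{compute $g$}. The per-coordinate slice of $\cA'$ is typically not such a tree — it may query $x_i$ just a few times without ever being able to resolve $g(x_i)$ — so $\chi(g)$ gives no direct lower bound on the conflict time under $\cA'$. The paper closes this gap with the auxiliary process $Q$ (Algorithm~\ref{Q}): after running $T$, it continues running $\clP$ on an optimal tree $\clB$ for every coordinate whose $\nq_i$ is still $1$, thereby extending each per-coordinate execution to a genuine tree computing $g$. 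Claim~\ref{dp} then shows $\E X \geq nd$ for the extended process, and subtracting the post-$T$ contribution $(n-\E Y)d$ together with the $t$-step budget of $T$ yields $\E Y \leq t/d$. Some device of this kind is indispensable; without it the intended invariant ``each commit is preceded by $\Omega(\chi(g))$ simulated queries'' is simply not available from the definition of $\chi$.
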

To prove Theorem~\ref{mainb} we draw on the techniques developed by Anshu et. al. and ben-David and Kothari. See Section~\ref{comp} for a proof of Theorem~\ref{mainb}. Theorem~\ref{main} follows from Theorems~\ref{maina} and~\ref{mainb}.
\section{Preliminaries}
\label{prelims}
A partial Boolean function $g$ is a relation in $\{0,1\}^m \times \{0,1\}$. For $b \in \{0,1\}$, $g^{-1}(b)$ is defined to tbe the set of strings $x$ in $\{0,1\}^n$ for which $(x,b) \in g$ and $(x,\overline{b}) \notin g$. $g^{-1}(0) \cup g^{-1}(0)$ is referred to as the set of valid inputs to $g$. We assume that for all strings $y \notin g^{-1}(0) \cup g^{-1}(1)$, both $(y,0)$ and $(y,1)$ are in $g$. For a string $x \in g^{-1}(0) \cup g^{-1}(1)$, $g(x)$ refers to the unique bit $b$ such that $(x,b) \in g$. All the probability distributions $\mu$ over the domain of a partial Boolean function $g$ in this paper are assumed to be supported entirely on $g^{-1}(0) \cup g^{-1}(1)$. Thus $g(x)$ is well-defined for any $x$ in the support of $\mu$.
\begin{definition}[Bounded-error Randomized Query Complexity]
Let $\mathcal{S}$ be any set. Let $h \subseteq \{0,1\}^k \times \mathcal{S}$ be any relation and $\epsilon \in [0,1/2)$. The 2-sided error randomized query complexity $\R_\epsilon(h)$ is the minimum number of queries made in the worst case by a randomized query algorithm $\mathcal A$ (the worst case is over inputs and the internal randomness of $\mathcal{A}$) that on each input $x \in \{0,1\}^k$ satisfies $\Pr[(x,\mathcal A(x)) \in h] \geq 1 - \epsilon$ (where the probability is over the internal randomness of $\mathcal{A}$). 
\end{definition}
\begin{definition}[Distributional Query Complexity]
 Let $h \subseteq \{0,1\}^k \times \mathcal{S}$ be any relation, $\mu$ a distribution on the input space $\{0,1\}^k$ of $h$, and $\epsilon \in [0,1/2)$. The distributional query complexity $\D^\mu_\epsilon(h)$ is the minimum number of queries made in the worst case (over inputs) by a deterministic query algorithm $\mathcal A$ for which $\Pr_{x \sim \mu}[(x,\mathcal A(x)) \in h] \geq 1 - \epsilon$.
\end{definition}

In particular, if $h$ is a function and $\mathcal{A}$ is a randomized or distributional query algorithm computing $h$ with error $\epsilon$, then $\Pr [h(x)=\mathcal{A}(x)] \geq 1-\epsilon$, where the probability is over the respective sources of randomness.

The following theorem is von Neumann's minimax principle stated for decision trees.
\begin{fact}[minimax principle]
\label{minmax}
For any integer $k$, set $\mathcal{S}$, and relation $h \subseteq \{0,1\}^k \times \mathcal{S}$,
\[\R_\epsilon(h)=\max_{\mu}\D_\epsilon^\mu(h).\]
\end{fact}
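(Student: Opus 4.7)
The plan is to prove both inequalities separately, with the nontrivial direction going through von Neumann's minimax theorem applied to an appropriate finite two-player zero-sum game between the algorithm designer and the adversary choosing inputs.

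First I would handle the easy direction, $\R_\epsilon(h) \geq \max_\mu \D^\mu_\epsilon(h)$. Let $\mathcal{A}$ be a randomized algorithm achieving worst-case query cost $T = \R_\epsilon(h)$ with error at most $\epsilon$ on every input. Fix any distribution $\mu$ on $\{0,1\}^k$. Then
\[
\Pr_{r,\,x \sim \mu}[(x,\mathcal{A}(x;r)) \notin h] \;\leq\; \epsilon,
\]
so by averaging there exists a fixing $r^*$ of the internal randomness such that the resulting deterministic tree errs on $\mu$ with probability at most $\epsilon$ while still making at most $T$ queries on every input. Hence $\D^\mu_\epsilon(h) \leq T$, and the inequality follows by taking the maximum over $\mu$.

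For the hard direction, set $T^* = \max_\mu \D^\mu_\epsilon(h)$ and let $\mathcal{T}$ be the (finite) set of deterministic decision trees on $\{0,1\}^k$ of depth at most $T^*$, together with an arbitrary output in $\mathcal{S}$ at each leaf. For $(\tau,x) \in \mathcal{T} \times \{0,1\}^k$, let $e(\tau,x) = 1$ if $(x,\tau(x)) \notin h$ and $0$ otherwise. Apply von Neumann's minimax theorem (the two strategy sets being the simplex over $\mathcal{T}$ and the simplex over $\{0,1\}^k$, both finite) to get
\[
\min_{p \in \Delta(\mathcal{T})} \max_{x \in \{0,1\}^k} \mathbb{E}_{\tau \sim p}[e(\tau,x)]
\;=\;
\max_{\mu \in \Delta(\{0,1\}^k)} \min_{\tau \in \mathcal{T}} \mathbb{E}_{x \sim \mu}[e(\tau,x)].
\]
By the definition of $T^*$, for every $\mu$ there is some $\tau \in \mathcal{T}$ achieving error at most $\epsilon$ under $\mu$, so the right-hand side is at most $\epsilon$. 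Therefore the left-hand side is also at most $\epsilon$: there exists a distribution $p^*$ on trees in $\mathcal{T}$ whose worst-case error over inputs is at most $\epsilon$. Sampling $\tau \sim p^*$ and running $\tau$ on the input is a randomized algorithm making at most $T^*$ queries in the worst case and erring with probability at most $\epsilon$ on every input, giving $\R_\epsilon(h) \leq T^*$.

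The main subtlety will be making sure the minimax theorem applies cleanly: the depth-$T^*$ trees are finitely many (up to equivalent relabelings of leaves by outputs, and output-set $\mathcal{S}$ can be collapsed to the finitely many values assigned at leaves of optimal trees for each $\mu$), and the input domain is finite, so both simplices are finite-dimensional and compact. A small but standard care point is to argue that one may restrict to a finite set of leaf labels in $\mathcal{S}$ without loss of generality; this is done by noting that at each leaf only finitely many ``decision classes'' in $\mathcal{S}$ matter for $h$ (one can choose any fixed representative per class, or work with a minimizing leaf-labeling for each tree-shape and $\mu$). Once this is in place, both directions combine to yield $\R_\epsilon(h) = \max_\mu \D^\mu_\epsilon(h)$.
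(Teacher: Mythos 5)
The paper states this as a known fact (Yao's/von Neumann's minimax principle for decision trees) and gives no proof of its own, so there is no ``paper's proof'' to compare against; your proof is the standard textbook argument and it is correct. The easy direction by averaging over the internal randomness to extract a good deterministic tree, and the hard direction by applying von Neumann's minimax theorem to the finite zero-sum game between depth-$T^*$ trees and inputs, is exactly the classical derivation of Yao's principle, and your handling of the only delicate point --- reducing to a finite set of leaf labels when $\mathcal{S}$ is infinite, so that the strategy sets are finite simplices --- is appropriate. One small remark: in the easy direction it is worth stating explicitly that the averaging is over the randomness $r$ of $\mathcal{A}$ for the \emph{fixed} $\mu$, and that the query bound is preserved because $\mathcal{A}$ makes at most $T$ queries under \emph{every} fixing of $r$ (which your wording implies but could make sharper). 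Otherwise the argument is complete and matches the intended content of the Fact.
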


Let $\mu$ be a probabilty distribution over $\{0,1\}^k$. $x \sim \mu$ implies that $x$ is a random string drawn from $\mu$. Let $C \subseteq \{0,1\}^k$ be arbitrary. Then $\mu \mid C$ is defined tobe the probability distribution obtained by conditioning $\mu$ on the event that the sampled string belongs to $C$, i.e.,
\[\mu \mid C(x)=\left\{  \begin{array}{ll} $0$ & \mbox{if $x \notin C$} \\
\frac{\mu(x)}{\sum_{y \in C} \mu(y)} & \mbox{if $x \in C$}\end{array}   \right.\]

For a partial Boolean function $g:\{0,1\}^m \rightarrow \{0,1\}$, probability distribution $\mu$ and bit $b$, 
\[\mu_b:=\mu \mid g^{-1}(b).\]
Notice that $\mu_0$ and $\mu_1$ are defined with respect to some Boolean function $g$, which will always be clear from the context.

\begin{definition}[Subcube, Co-dimension]
A subset $\clC$ of $\{0,1\}^m$ is called a subcube if there exists a set $S \subseteq \{1, \ldots, m\}$ of indices and an \emph{assignment function} $A:S \rightarrow \{0,1\}$ such that $\clC=\{x \in \{0,1\}^m:\forall i \in S, x_i=A(i)\}$. The co-dimension $\codim(\clC)$ of $\clC$ is defined to be $|S|$. 
\end{definition}
Now we define composition of two relations.
\begin{definition}[Composition of relations]
\label{def:comp}
We now reproduce from the Section~\ref{intro} the definition of composed relations. Let $f \subseteq \{0,1\}^n \times \mathcal{S}$ and $g \subseteq \{0,1\}^m \times \{0,1\}$ be two relations. The composed relation $f \circ g^n \subseteq \left(\{0,1\}^m\right)^n \times \mathcal{S}$ is defined as follows: For $x=(x^{(1)}, \ldots, x^{(n)}) \in \left(\{0,1\}^m\right)^n$ and $s \in \mathcal{S}$, $(x,s) \in f \circ g^n$ if and only if there exists $b=(b^{(1)}, \ldots, b^{(n)}) \in \{0,1\}^n$ such that for each $i=1, \ldots, n$, $(x^{(i)},b^{(i)}) \in g$ and $(b,s) \in f$.
\end{definition}
We will often view a deterministic query algorithm as a binary decision tree. In each vertex $v$ of the tree, an input variable is queried. Depending on the outcome of the query, the computation goes to a child of $v$. The child of $v$ corresponding to outcome $b$ to the query made is denoted by $v_b$.

It is well known that the set of inputs that lead the computation of a decision tree to a certain vertex forms a subcube. We will denote use the same symbol (e.g. $v$) to refer to a vertex as well as the subcube associated with it.

The depth of a vertex $v$ in a tree is the number of vertices on the unique path from the root of the tree to $v$ in the tree. Thus, the depth of the root is $1$.
\begin{definition}
Let $\cA$ be a decision tree on $m$ bits. Let $\eta_0$ and $\eta_1$ be two probability distributions with disjoint supports. Let $v$ be a vertex in $\cA$. Let variable $x_i$ be queried at $v$. Then, \[\Delta^{(v)}:=\left\{\begin{array}{ll}|\Pr_{x \sim \eta_0} [x_i=0]-\Pr_{x \sim \eta_1} [x_i=0]| &  \mbox{if $v \neq \bot$.} \\ 1 &  \mbox{if $v=\bot$.} \end{array}\right.\] 
\end{definition}
Note that $\Delta^{(v)}$ is defined with respect to distributions $\eta_0$ and $\eta_1$. In our application, we will often consider a decision tree $\cA$, a partial Boolean function $g$ and a probability distributions $\mu$ over the inputs. $\Delta^{(v)}$, for a vertex $v$ of $\cA$, will then be assumed to be with respect to the distributions $(\mu_b \mid v)_{b \in \{0,1\}}$.
%Note that conditioned on the event that the process~\ref{A'} reaches a vertex $v$, the probability that it terminates in the same iteration is equal to $\Delta^{(v)}$. Let $b:=g(x)$.
\begin{claim}
\label{mutin}
Let $\cA$ be a decision tree on $m$ bits. Let $g$ be a partial Boolean function. Let $x \sim \{0,1\}^n$ be sampled from a distribution $\mu$. Let $v$ be a vertex in $\cA$. Let variable $x_i$ be queried at $v$. Then,
\[\I_\mu(g(x) : x_i \mid x \in v) = \I_{\mu \mid v}(g(x) : x_i)\geq 32 \left(\Pr_{x \sim \mu \mid v}[g(x)=0] \cdot \Pr_{x \sim \mu \mid v}[g(x)=1] \cdot \Delta^{(v)}\right)^2,\]
where $\Delta^{(v)}$ is with respect to the distributions $(\mu_b \mid v)_{b \in \{0,1\}}$.
\end{claim}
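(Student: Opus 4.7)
The plan is to reduce the inequality to Pinsker's inequality applied to the joint-versus-product distribution of $(g(x), x_i)$, after an explicit $L_1$ computation. The opening equality is immediate: conditioning the underlying law $\mu$ on the event $x \in v$ (as opposed to a random variable) and then computing mutual information is, by definition, the same as computing mutual information under the restricted distribution $\mu \mid v$, so no work is required there.

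For the inequality, I would write $\I_{\mu \mid v}(g(x) : x_i) = D\bigl(P_{g(x), x_i} \,\big\|\, P_{g(x)} \cdot P_{x_i}\bigr)$, where all probabilities are under $\mu \mid v$, and invoke Pinsker's inequality to lower bound this divergence by a constant multiple of $\|P_{g(x), x_i} - P_{g(x)} \cdot P_{x_i}\|_1^2$. Writing $p_b := \Pr_{\mu \mid v}[g(x) = b]$, $q_b := \Pr_{\mu_b \mid v}[x_i = 0]$, and $\bar q := p_0 q_0 + p_1 q_1$, the key algebraic identities are $q_0 - \bar q = p_1 (q_0 - q_1)$ and $q_1 - \bar q = -p_0 (q_0 - q_1)$. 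Hence each of the four atoms $(b, c) \in \{0,1\}^2$ contributes $p_b \cdot p_{1-b} \cdot \Delta^{(v)} = p_0 p_1 \Delta^{(v)}$ to the $L_1$ distance, and summing gives $\|P_{g(x), x_i} - P_{g(x)} \cdot P_{x_i}\|_1 = 4 p_0 p_1 \Delta^{(v)}$.

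Squaring this and absorbing the Pinsker constant produces a lower bound on the mutual information of the form $c \cdot (p_0 p_1 \Delta^{(v)})^2$ for an explicit absolute constant $c$, which is exactly the shape required. The structural core of the argument---Pinsker together with the one-line $L_1$ computation above---is short, so I do not expect the content to be hard. The only delicate step, and what I view as the main obstacle, is fixing conventions (base of the logarithm in the definition of $\I$, whether Pinsker is stated in its $d_{TV}$ form or its $L_1$ form, and using the sharpest Bernoulli version if needed) so that the constant in the bound matches the value $32$ stated in the claim. This is routine bookkeeping but is the only place where care is actually required.
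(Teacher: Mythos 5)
Your proposal matches the paper's proof essentially step for step: the paper also rewrites $\I(g(x):x_i)$ as a KL divergence between the joint law and the product of marginals, applies Pinsker's inequality in the form $\Div(P\|Q)\ge 2\|P-Q\|_1^2$, and computes each of the four atoms' contribution to the $L_1$ distance to be exactly $p_0 p_1 \Delta^{(v)}$, yielding $\|P-Q\|_1 = 4p_0p_1\Delta^{(v)}$ and hence the constant $2\cdot 4^2=32$. The only gap you flagged---matching the constant---is resolved precisely by this version of Pinsker, so your argument is correct and is the paper's argument.
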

\begin{proof}[Proof of Claim~\ref{mutin}]
Define $b:=g(x)$. Condition on the event $x \in v$. Let $(b \otimes x_i)$ be the distribution over pairs of bits, where the bits are distributed independently according to the distributions of $b$ and $x_i$ respectively. We use the equivalence: $\I(b : x_i)=\Div((b,x_i) || (b \otimes x_i))$. Now, an application of \emph{Pinsker's inequality} implies that 

\begin{align}
\label{fst}\Div((b,x_i) || (b \otimes x_i)) \geq 2 ||(b,x_i)-(b \otimes x_i)||^2_1.
\end{align}
Next, we bound $|(b,x_i)-(b \otimes x_i)||_1$. To this end, we fix bits $z_1, z_2 \in \{0,1\}$, and bound $|\Pr[(b,x_i)=(z_1,z_2)]-\Pr[(b \otimes x_i)=(z_1,z_2)]|$. We have that,
\begin{align}
\label{t1} \Pr[(b,x_i)=(z_1,z_2)]&=\Pr[b=z_1]\Pr[x_i=z_2 \mid b = z_1].
\end{align}
Now,
\begin{align}
\label{t2} \Pr[(b \otimes x_i)=(z_1,z_2)]&=\Pr[b=z_1]\Pr[x_i=z_2] \nonumber \\
&=\Pr[b=z_1](\Pr[b=z_1]\Pr[x_i=z_2 \mid b=z_1]+&\nonumber \\
& \qquad \qquad \qquad \qquad \qquad \Pr[b=\overline{z_1}]\Pr[x_i=z_2 \mid b=\overline{z_1}]).
\end{align}
Taking the absolute difference of~(\ref{t2}) and~(\ref{t1}) we have that,
\begin{align}
&|\Pr[(b,x_i)=(z_1,z_2)]-\Pr[(b \otimes x_i)=(z_1,z_2)]| \nonumber \\
&=\Pr[b=z_1] \cdot \Pr[b=\overline{z_1}] \cdot \Delta^{(v)}=\Pr[b=0] \cdot \Pr[b=1] \cdot \Delta^{(v)}\label{fin}
\end{align}
The Claim follows by adding~(\ref{fin}) over $z_1, z_2$ and using~(\ref{fst}).
\end{proof}

\section{Conflict Complexity}
\label{cc}
In this section, we introduce a randomized process $\clP$ (formally given in Algorithm~\ref{P}). This process is going to play a central role in the proof of our composition theorem (Theorem~\ref{mainb}). Later in the section, we use $\clP$ to define the \emph{conflict complexity} of a partial Boolean function $g$.

Let $n>0$ be any integer and $\clB$ be any deterministic query algorithm that runs on inputs in $(\{0,1\}^m)^n$. $\clB$ can be though of as just a query procedure that queries various input variables, and then terminates without producing any output. Let $x=(x_i^{(j)})_{{i=1, \ldots, n} \atop {j=1, \ldots, m}}$ be a generic input to $\clB$, and $x_i$ stand for $(x_i^{(j)})_{j=1, \ldots, m}$. For a vertex $v$ of $\clB, v^{(i)}$ denotes the subcube in $v$ corresponding to $x_i$, i.e., $v=\times_{i=1}^n v^{(i)}$. Recall from Section~\ref{prelims} that for $b \in \{0,1\}$, $v_b$ stands for the child of $v$ corresponding to the query outcome being $b$. Let $\mu_0$ and $\mu_1$ be any two probability distributions supported on $g^{-1}(0)$ and $g^{-1}(1)$ respectively. Let $z=(z_1, \ldots, z_n) \in \{0,1\}^n$ be arbitrary. Now consider the  probabilistic process $\clP$ given by Algorithm~\ref{P}. Note that $\clP$ can be thought of as a randomized query algorithm on input $z \in \{0,1\}^n$, where a query to $z_i$ corresponds to an assignment of $0$ to $\mathsf{NOQUERY}_i$ in line~\ref{query}. This view of $\clP$ will be adopted in Section~\ref{comp}.

\begin{algorithm}[!h]\label{P}
\DontPrintSemicolon
\caption{ $\clP$ on $\clB, \mu_0, \mu_1, z$}
%\KwIn{$z \in \{0,1\}^n$}
%\KwOut{$f(z)$ or $\perp$ (abort)}
\For{$1 \leq k \leq n$}
{$\nq_k \gets 1$. \;
$\N_k \gets 0$. \;}
$v \gets $Root of $\clB$ \ \ \ \ \ \ \ \ \ \ \ \ \ \  \ \ \ \ \ \ \ // Corresponds to $\{0,1\}^m$ \;
\While{$v$ is not a leaf of $\clB$}
{
Let $\clB$ query $x_i^{(j)}$ at $v$. \;
\If{$\nq_i = 1$}
{Sample a fresh real number $r \sim [0,1]$ uniformly at random. \label{sampler}\;
\If{$r \leq \min_b \Pr_{x_i \sim \mu_b}[x_i^{(j)}=0 \mid x_i \in v_i]$}
{$v \gets v_0$. \;
}
\ElseIf{$r \geq \max_b \Pr_{x_i \sim \mu_b}[x_i^{(j)}=0 \mid x_i \in v^{(i)}]$}
{$v \gets v_1$. \;
 }
\Else
{$\nq_i \gets 0$. \label{query}\;
\If{$r \leq \Pr_{x_i \sim \mu_{z_i}}[x_i^{(j)}=0 \mid x_i \in v^{(i)}]$}
{$v \gets v_0$. \;}
\Else
{$v \gets v_1$. \;}
}
$ \N_i\gets \N_i+1$. \;}
\Else{
Sample $b$ from the distribution $\mu_{z_i}$ conditioned on the event $x_i \in v^{(i)}$. \;
$v \gets v_b$. \;}
}
\end{algorithm}
We now prove an important structural result about $\clP$ which will be used many times in our proofs. Consider the following distribution $\gamma_z$ on $(\{0,1\}^m)^n$: For each $i$, sample $x_i$ independently from $\mu_{z_i}$.

Let $v$ be a vertex of $\clB$. Let $A_\clB(v)$ be the event that process $\clP$ reaches node $v$, and $B_\clB(v)$ be the event that for a random input $x$ sampled from $\gamma_z$, the computation of $\clB$ reaches node $v$.
\begin{claim}
\label{samedistn}
For each vertex $v$ of $\clB$, \[\Pr[A_\clB(v)]=\Pr[B_\clB(v)].\]
\end{claim}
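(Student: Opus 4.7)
My plan is to prove Claim~\ref{samedistn} by induction on the depth of $v$ in $\clB$. The base case is the root, where both $A_\clB(v)$ and $B_\clB(v)$ are the sure event. For the inductive step, let $u = v_b$ be a non-root vertex whose parent $v$ is queried on variable $x_i^{(j)}$. Reaching $u$ decomposes as reaching $v$ followed by the correct one-step transition, so (using the inductive hypothesis at $v$) it suffices to prove
\[
\Pr[\clP \text{ moves to } v_b \mid A_\clB(v)] \;=\; \Pr_{x \sim \gamma_z}[x_i^{(j)} = b \mid x \in v].
\]
The right-hand side is straightforward: since $\gamma_z$ is a product distribution and $v = \times_k v^{(k)}$, conditioning on $x \in v$ preserves independence across blocks, so it equals $q_b := \Pr_{x_i \sim \mu_{z_i}}[x_i^{(j)} = b \mid x_i \in v^{(i)}]$.

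The core of the proof is to show that the left-hand side also equals $q_b$, and crucially that this holds regardless of the (random) value of the flag $\nq_i$ at the moment $\clP$ arrives at $v$. If $\nq_i = 0$, the algorithm samples the branch directly from $\mu_{z_i} \mid v^{(i)}$, giving transition probability exactly $q_b$. The interesting sub-case is $\nq_i = 1$: setting $p_c := \Pr_{x_i \sim \mu_c}[x_i^{(j)} = 0 \mid x_i \in v^{(i)}]$ and assuming without loss of generality $p_0 \le p_1$, the uniform $r \in [0,1]$ lies in one of the three intervals $[0, p_0]$, $(p_0, p_1)$, $[p_1, 1]$ distinguished by the algorithm. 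A short case split on $z_i \in \{0,1\}$ will show that the total probability of landing on $v_0$ equals $p_{z_i}$, and hence equals $q_0$; this is precisely what the ``middle-interval'' coupling in Algorithm~\ref{P} is engineered to achieve.

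Since both $\nq_i$ sub-cases give the same transition probability $q_b$, the overall transition probability is $q_b$ without needing to condition on the joint history up to $v$, and the induction closes. I expect the only non-routine step to be the case analysis for $\nq_i = 1$, which is the sole place where the careful design of the coupling is actually used; everything else is bookkeeping about product distributions and rectangular subcubes. A minor subtlety I would double-check is that the flag $\nq_i$ is indeed determined by the earlier coin flips along the path to $v$ (so the event $\{\nq_i = c\} \cap A_\clB(v)$ is well-defined), after which conditioning on this value decouples cleanly from the independent fresh sample $r$ drawn at $v$.
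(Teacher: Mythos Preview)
Your proposal is correct and follows essentially the same approach as the paper: induction on the depth of the vertex, with the inductive step reducing to the verification that the one-step transition probability out of the parent equals $\Pr_{x_i \sim \mu_{z_i}}[x_i^{(j)}=b \mid x_i \in v^{(i)}]$ in both the $\nq_i=0$ and $\nq_i=1$ sub-cases. The only cosmetic difference is in how the $\nq_i=1$ case analysis is organized: you fix $p_0 \le p_1$ and split on $z_i$, handling both branches directly, whereas the paper splits on whether $p_{z_i}$ is the smaller or larger of the two and, in the latter case, deduces the result by passing to the sibling child and subtracting.
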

\begin{proof}
We will prove by induction on the depth $t$ of $v$, i.e., the number of vertices on the unique path from the root to $v$ in $\clB$.
\begin{description}
\item[Base case:] $t=1$. $v$ is the root of $\clB$. Thus $\Pr[A_\clB(v)]=\Pr[B_\clB(v)]=1$.
\item[Inductive step:] Assume that $t \geq 2$, and that the statement is true for all vertices at depth at most $t-1$. Since $t \geq 2$, $v$ is not the root of $\clB$. Let $u$ be the ancestor of $v$, and variable $x_i^{(j)}$ be queried at $u$. without loss of generality assume that $v$ is the child of $u$ corresponding to $x_i^{(j)}=0$. We split the proof into the following two cases.
\begin{itemize}
\item {\bf Case 1:} $\Pr_{x_i \sim \mu_{z_i}}[x_i^{(j)}=0 \mid x_i \in u_i] \leq \Pr_{x_i \sim \mu_{\overline{z_i}}}[x_i^{(j)}=0 \mid x_i \in u_i]$.

Condition on $A_\clB(u)$ and $\nq_i=0$. The probability that $\clP$ reaches $v$ is $\Pr_{x_i \sim \mu_{z_i}}[x_i^{(j)}=0 \mid x_i \in u_i]$. Now, condition on $A_\clB(u)$ and $\nq_i=1$. The probability that $\clP$ reaches $v$ is exactly equal to the probability that the real number $r$ sampled at $v$ lies in $[0, \Pr_{x_i \sim \mu_{z_i}}[x_i^{(j)}=0 \mid x_i \in u_i] ]$, which is equal to $\Pr_{x_i \sim \mu_{z_i}}[x_i^{(j)}=0 \mid x_i \in u_i]$. Thus,
\begin{align}
\Pr[A_\clB(v]&=\Pr[A_\clB(u)]. \Pr[A_\clB(v) \mid A_\clB(u)] \nonumber \\
&=\Pr[A_{\clB}(u)] \cdot \Pr_{x_i \sim \mu_{z_i}}[x_i^{(j)}=0 \mid x_i \in u_i]. \label{c1:one}
\end{align}
Now condition on $B_\clB(u)$. The probability that $\clB$ reaches $v$ is exactly equal to the probability that $x_i^{(j)}=0$ when $x$ is sampled according to the distribution $\gamma_z$ conditioned on the event that $x \in u$. Note that in the distribution $\gamma_z$, the $x_k$'s are independently distributed. Thus,
\begin{align}
\Pr[B_\clB(v)]&=\Pr[B_\clB(u)]. \Pr[B_\clB(v) \mid B_\clB(u)] \nonumber \\
&=\Pr[B_{\clB}(u)] \cdot \Pr_{x_i \sim \mu_{z_i}}[x_i^{(j)}=0 \mid x_i \in u_i]. \label{c1:two}
\end{align}
By the inductive hypothesis, $\Pr[A_{\clB}(u)]=\Pr[B_{\clB}(u)]$. The claim follows from~(\ref{c1:one}) and~(\ref{c1:two}).
\item {\bf Case 2:} $\Pr_{x_i \sim \mu_{z_i}}[x_i^{(j)}=0 \mid x_i \in u_i] > \Pr_{x_i \sim \mu_{\overline{z_i}}}[x_i^{(j)}=0 \mid x_i \in u_i]$.
Let $v'$ be the child of $u$ corresponding to $x_i^{(j)}=1$. By an argument similar to Case 1, we have that 
\begin{align}
\Pr[A_\clB(v')]=\Pr[B_\clB(v')]. \label{c2}
\end{align}
Now,
\begin{align}
\Pr[A_\clB(v)] &=\Pr[A_\clB(u)] - \Pr[A_\clB(v')] \nonumber \\
&= \Pr[B_{\clB}(u)] - \Pr[A_\clB(v')]  \mbox{\ \ \ \ \ (By inductive hypothesis)} \nonumber \\
&= \Pr[B_\clB(u)] - \Pr[B_\clB(v')] \mbox{\ \ \ \ \ \ (By (\ref{c2}))} \nonumber \\
&= \Pr[B_\clB(v]. \nonumber
\end{align}
\end{itemize}
\end{description}
\end{proof}
Let $n=1, z \in \{0,1\}$, and $\clB$ be a decision tree that computes $g$. Consider process $\clP$ on $\clB, \mu_0, \mu_1, z$. Note that $\nq_1$ is set to $0$ with probability $1$. To see this observe that as long as $\nq_1=1$, the current subcube $v$ contains strings from the supports of both $\mu_0$ and $\mu_1$, and hence from both $g^{-1}(0)$ and $g^{-1}(1)$. If $\nq_1$ is not set to $0$ for the entire run of $\clP$, then there exist inputs $x \in g^{-1}(0), x' \in g^{-1}(1)$ which belong to the same leaf of $\clB$, contradicting the hypothesis that $\clB$ computes $g$. Let the random variable $\clN$ stand for the value of the variable $\N_1$ after the termination of $\clP$. Note that $\clN$ is equal to the the index of the iteration of the while loop in which $\nq_1$ is set to $0$. The distribution of $\clN$ depends on $\mu_0, \mu_1$ and $\clB$, which in our applications will either be clear from the context, or clearly specified. Note that the distribution of $\clN$ is independent of the value of $z$.

\begin{definition}
The \emph{conflict complexity} of a partial Boolean function $g$ with respect to distributions $\mu_0$ and $\mu_1$ supported on $g^{-1}(0)$ and $g^{-1}(1)$ respectively, and decision tree $\clB$ computing $g$, is defined as:
\[\chi (\mu_0, \mu_1, \clB)=\E[\clN].\footnote{As observed before, the choices of $\mu_0, \mu_1$ and $\clB$ are built into the definition of $\clN$.}\]
The conflict complexity of $g$ is defined as:
\[\chi(g)=\max_{\mu_0, \mu_1} \min_\clB \chi(\mu_0, \mu_1, \clB).\]
Where the maximum is over distributions $\mu_0$ and $\mu_1$ supported on $g^{-1}(0)$ and $g^{-1}(1)$ respectively, and the minimum is over decision trees $\clB$ computing $g$.
\end{definition}
For a pair $(\mu_0, \mu_1)$ of distributions, let $\clB$ be the decision tree computing $g$ such that $\E[\clN]$ is minimized. We call such a decision tree an \emph{optimal} decision tree for $\mu_0, \mu_1$. We conclude this section by making an important observation about the structure of optimal decision trees. Let $v$ be any node of $\clB$. Let $\mu_0':=\mu_0 \mid v$ and $\mu_1':=\mu_1 \mid v$. Let $\clB_v$ denote the subtree of $\clB$ rooted at $v$. We observe that $\clB_v$ is an optimal  tree for $\mu_0'$ and $\mu_1'$; if it is not then we could replace it by an optimal tree for $\mu_0'$ and $\mu_1'$, and for the resultant tree, the expected value of $\clN$ with respect to $\mu_0$ and $\mu_1$ will be smaller than that in $\clB$. This will contradict the optimality of $\clB$. This recursive sub-structure property of optimal trees will be helpful to us.
\section{Conflict Complexity and Randomized Query Complexity}
In this section, we will prove Theorem~\ref{maina} (restated below).
\maina*
\begin{proof}
\label{hibias}
We will bound the distributional query complexity of $g$ for each input distribution $\mu$ with rspect to error $47/95<1/2$, $\D_{47/95}^\mu (g)$, from above by $O(\chi(g)^2)$. Theorem~\ref{maina} will follow from the \emph{minimax principle} (Fact~\ref{minmax}), and the observation that the error can be brought down to $1/3$ by constantly many independent repetitions followed by a selection of the majority of the answers. It is enough to consider distributions $\mu$ supported on valid inputs of $g$. To this end, fix a distribution $\mu$ supported only on $g^{-1}(0) \cup g^{-1}(1)$.

Let $\chi(g)=d$. Let $\mu_b$ be the distribution obtained by conditioning $\mu$ on the event $g(x)=b$. Let $\clB$ be an optimal decision tree for distributions $\mu_0$ and $\mu_1$. Clearly $\E[\clN] \leq \chi(g) = d$.

We first prove some structural results about $\clB$. Let $\clB$ be run on a random input $x$ sampled according to $\mu$. Let $v_t$ be the random vertex at which the $t$-th query is made; If $\clB$ terminates before making $t$ queries, define $v_t:=\bot$. Let $\clE$ be any event which is a collection of possible transcripts of $\clB$, such that $\Pr[\clE] \geq \frac{3}{4}$. Recall from Section~\ref{prelims} that for any vertex $v$ of $\clB$, $\Delta^{(v)}$ is assumed to be with respect to the probability distribution $\mu \mid v$.
\begin{claim}
\label{sumofdelta}
\[\sum_{t=1}^{10d} \E[\Delta^{(v_t)} \mid \clE]  \geq \frac{13}{20}.\]
\end{claim}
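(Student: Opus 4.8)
The plan is to relate $\sum_{t=1}^{10d} \E[\Delta^{(v_t)} \mid \clE]$ to the expected number of iterations $\clN$ of the process $\clP$, for which we know $\E[\clN] \le d$. The key conceptual bridge is Claim~\ref{samedistn}: when we run $\clP$ on the decision tree $\clB$ (with $n=1$), the distribution over the vertex reached at each step coincides with the distribution obtained by running $\clB$ on a random input from $\gamma_z = \mu_z$. Since $z$ can be either bit, this lets us compare the trajectory of $\clP$ against the actual trajectory of $\clB$ on $\mu = $ a mixture of $\mu_0$ and $\mu_1$. The crucial point is that in a given iteration of $\clP$'s while loop, the probability that $\nq_1$ gets set to $0$ (i.e., that this is the ``conflict'' step) is, conditioned on reaching vertex $v$ and querying $x_i^{(j)}$, exactly $\Delta^{(v)}$ — because $\nq_1$ is set to $0$ precisely when the sampled $r$ falls in the interval between $\min_b \Pr_{x\sim\mu_b}[x_i^{(j)}=0\mid v]$ and $\max_b \Pr_{x\sim\mu_b}[x_i^{(j)}=0\mid v]$, an interval of length $\Delta^{(v)}$.

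Concretely, I would first argue that $\clN$ stochastically dominates a quantity built from the $\Delta^{(v_t)}$'s. Let $v_1, v_2, \dots$ be the sequence of vertices visited by $\clB$ on a $\mu$-random input (with $v_t = \bot$ after termination, where by convention $\Delta^{(\bot)} = 1$). By Claim~\ref{samedistn} and the observation in the preceding paragraph, $\E[\clN]$ equals $\sum_{t \ge 1} \Pr[\clN \ge t]$, and $\Pr[\clN \ge t]$ is at least the probability that none of the first $t-1$ steps was a conflict step; unwinding this gives roughly $\E[\clN] \ge \sum_{t\ge 1}\E\!\left[\prod_{s<t}(1 - \Delta^{(v_s)})\right] \ge \E\!\left[\sum_{t\ge 1}\prod_{s<t}(1-\Delta^{(v_s)})\right]$. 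A standard manipulation (e.g. $\sum_t \prod_{s<t}(1-\Delta_s) \ge 1 + \sum_t (1 - \sum_{s \le t}\Delta_s) \cdot [\cdot]$, or more cleanly: if $\sum_{s \le T} \Delta^{(v_s)} < 1/2$ then the product stays above $1/2$ for all $t \le T+1$) shows that $\E[\clN] \le d$ forces $\sum_{t=1}^{T}\E[\Delta^{(v_t)}]$ to be bounded below once $T$ is a suitable multiple of $d$. Taking $T = 10d$ should comfortably give $\sum_{t=1}^{10d}\E[\Delta^{(v_t)}] \ge 1 - O(1/10) $, in particular well above $13/20$ with room to spare.

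Finally I would pass from the unconditional expectation to the conditional expectation given $\clE$. Since $\Delta^{(v_t)} \in [0,1]$ and $\Pr[\clE] \ge 3/4$, we have $\E[\Delta^{(v_t)}] \le \E[\Delta^{(v_t)} \mid \clE]\Pr[\clE] + \Pr[\bar\clE] \le \E[\Delta^{(v_t)}\mid\clE] + 1/4$ per term — but summing that naively over $10d$ terms loses too much, so instead I would use $\sum_t \E[\Delta^{(v_t)}] = \E[\sum_t \Delta^{(v_t)}] \le \E[\sum_t \Delta^{(v_t)} \mid \clE]\Pr[\clE] + \E[\sum_{t=1}^{10d}\Delta^{(v_t)}\mid \bar\clE]\Pr[\bar \clE]$, bound the last term crudely by $10d \cdot \Pr[\bar\clE]$, and then note $\E[\clN]$ also dominates contributions on $\bar\clE$ — actually the cleaner route is to redo the stochastic-domination argument conditioned on $\clE$ from the start, since conditioning on a down-closed-enough event only changes constants. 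The main obstacle I anticipate is making the ``product of $(1-\Delta)$ terms'' argument fully rigorous: one must be careful that $\clP$'s conflict event at step $t$ really has conditional probability $\Delta^{(v_t)}$ simultaneously for the right conditioning, and that the telescoping/domination between $\E[\clN]$ and $\E\big[\sum_t \prod_{s<t}(1-\Delta^{(v_s)})\big]$ handles the $\bot$/termination case correctly (this is exactly why $\Delta^{(\bot)}:=1$ was defined). Choosing the constant $10$ generously absorbs all the slack from the conditioning on $\clE$ and from Jensen-type inequalities, so the final bound $13/20$ is not tight and should follow by straightforward estimation.
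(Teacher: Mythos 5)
Your high-level plan is the same as the paper's: couple the run of $\clB$ on $x\sim\mu$ with the process $\clP$ by drawing $z$ with the law of $g(x)$ (so that, by Claim~\ref{samedistn}, the vertices $u_t$ visited by $\clP$ have the same distribution as the $v_t$), use that at a vertex $v$ the probability of the conflict step is exactly $\Delta^{(v)}$, and use Markov's inequality on $\E[\clN]\le d$ to conclude that the conflict occurs within $10d$ iterations with probability at least $9/10$. The gap is in how you convert this into the stated bound. Your central inequality, $\Pr[\clN\ge t]\ \ge\ \E\bigl[\prod_{s<t}(1-\Delta^{(v_s)})\bigr]$ with the expectation over the \emph{unconditioned} $\mu$-run of $\clB$, is not justified: at a vertex $v$ the event ``no conflict at this step'' and the branch taken are correlated. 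Conditioned on no conflict, the walk goes to $v_0$ with probability $\min_b\Pr_{x\sim\mu_b}[x^{(j)}=0\mid x\in v]/(1-\Delta^{(v)})$, which is not the transition probability of the $\mu$-run; so the survival probability is a product along a \emph{tilted} walk, and writing it as an expectation of the product along the original path needs an argument you do not supply (you flag this worry but do not resolve it). The paper avoids any product formula: it uses the per-step identity $\Pr[\nq_1\text{ is set to }0\text{ in iteration }t\mid\clE]=\Pr[E_t\mid\clE]\cdot\E[\Delta^{(u_t)}\mid E_t,\clE]\le\E[\Delta^{(u_t)}\mid\clE]$ and a union bound over $t\le 10d$, which needs only first moments of the ($\clE$-conditioned) path distribution.

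The second problem is quantitative: there is no ``room to spare''. The target $13/20$ is exactly $9/10-1/4$: Markov at horizon $10d$ gives $\Pr[\text{conflict within }10d]\ge 9/10$, and passing to the conditioning costs exactly $\Pr[\overline{\clE}]\le 1/4$ via $\Pr[A\mid\clE]\ge\Pr[A]-\Pr[\overline{\clE}]$. Your ``cleaner'' variant (if $\sum_{s\le T}\Delta^{(v_s)}<1/2$ then the product stays above $1/2$) yields only $\E\bigl[\sum_{t\le 10d}\Delta^{(v_t)}\bigr]\ge\tfrac12\cdot\tfrac45=\tfrac25$ even before conditioning on $\clE$, which falls short of $13/20$; and the plan to ``redo the argument conditioned on $\clE$ since it only changes constants'' is precisely where nothing beyond the $1/4$ loss can be afforded. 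The fix is to replace the stochastic-domination/product step by the union bound over the events ``$\nq_1$ is set to $0$ at iteration $t$'', conditioned on $\clE$, which gives $\sum_{t=1}^{10d}\E[\Delta^{(v_t)}\mid\clE]\ \ge\ \Pr[\text{conflict within }10d]-\Pr[\overline{\clE}]\ \ge\ \tfrac{9}{10}-\tfrac14=\tfrac{13}{20}$, exactly as in the paper.
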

\begin{proof}
Let us sample vertices $u_t$ of $\clB$ as follows:
\begin{enumerate}
\item Set $z=\left\{\begin{array}{ll}
1 & \mbox{with probability $\Pr_{x \sim \mu}[g(x)=1]$}, \\
0 & \mbox{with probability $\Pr_{x \sim \mu}[g(x)=0]$}
\end{array}\right.$
\item Run process $\clP$ for $\clB, \mu_0, \mu_1, z$.
\item Let $u_t$ be the vertex $v$ in the beginning of the $t$-th iteration of the \emph{while} loop of Algorithm~\ref{P}. Return $(u_t)_{t=1,\ldots}$. If the simulation stops after $i$ iterations, set $u_t:=\bot$ for all $t > i$.
\end{enumerate}
By Claim~\ref{samedistn}, and since $z$ has the same distribution as that of $g(x)$ where $x$ is sampled from $\mu$, the vertices $u_t$ and $v_t$ have the same distribution. 
In the above sampling process for each $t=1, \ldots, 10d$, let $E_t$ be the event that $\nq_1=1$ in the beginning of the $t$-th iteration of the \emph{while} loop of Algorithm~\ref{P}. Conditioned on $\clE$, the probability that $\nq_1$ is set to $0$ in the $t$-th iteration is $\Pr[E_t \mid \clE] \cdot \E[\Delta^{(u_t)} \mid E_t, \clE]$\footnote{Note that conditioned on $E_t$, $u_t \neq \bot$.}. By union bound we have that,
\begin{align}
\sum_{t=1}^{10d} \E[\Delta^{(v_t)} \mid \clE]&=\sum_{t=1}^{10d} \E[\Delta^{(u_t)} \mid \clE] \nonumber \\
&\geq \sum_{t=1}^{10d}\Pr[E_t \mid \clE] \cdot  \E[\Delta^{(u_t)} \mid E_t, \clE] \nonumber \\
&\geq \Pr\left[\overline{\bigcap_{t=1}^{10d} E_t} \mid \clE \right]  \label{delbound1} \nonumber \\
&\geq \Pr\left[\overline{\bigcap_{t=1}^{10d} E_t}\right] - \Pr[\overline{\clE}].
\end{align}
Now, since $\E[\clN] \leq \chi(g) = d$, we have by \emph{Markov's inequality} that the probability that the process $\clP$, when run for $\clB, \mu_0, \mu_1$ and the random bit $z$ generated as above\footnote{Recall that the distribution of $\clN$ is independent of $z$.}, sets $\nq_1$ to $0$ within first $10d$ iterations of the \emph{while} loop, is at least $9/10$. Thus we have that,
\begin{align}
\Pr\left[\bigcap_{t=1}^{10d} E_t\right]^c \geq \frac{9}{10}. \label{delbound2}
\end{align}
The claim follows from (\ref{delbound1}), (\ref{delbound2}) and the hypothesis $\Pr[\clE] \geq \frac{3}{4}$.
\end{proof}
The next Lemma follows from Claim~\ref{sumofdelta} and the recursive sub-structure property of optimal trees discussed in the last paragraph of Section~\ref{cc}.
\begin{lemma}
\label{sumofdelta2}
Let $i$ be any positive integer. Then,
\[\sum_{t=1}^{10di} \E[\Delta^{(v_t)} \mid \clE]  \geq \frac{13i}{20}.\]
\end{lemma}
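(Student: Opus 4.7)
The plan is to prove Lemma~\ref{sumofdelta2} by induction on $i$, with Claim~\ref{sumofdelta} serving as the base case $i=1$. For the inductive step, assuming the statement for $i-1$, it suffices to show that the block of queries indexed $10d(i-1)+1, \ldots, 10di$ contributes at least $13/20$ to $\E\!\left[\sum_t \Delta^{(v_t)} \mid \clE\right]$.

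My strategy is to condition on the random vertex $w := v_{10d(i-1)+1}$ and redo the proof of Claim~\ref{sumofdelta} inside the subtree $\clB_w$. By the recursive sub-structure property at the end of Section~\ref{cc}, $\clB_w$ is optimal for $(\mu_0 \mid w, \mu_1 \mid w)$, so the analogue of $\clN$ in $\clB_w$ has expectation at most $\chi(g) = d$, and Markov's inequality again ensures that $\nq_1$ is set to $0$ within $10d$ iterations with probability at least $9/10$. Writing $a_w := \Pr[\clE \mid v_{10d(i-1)+1}=w]$, the same probability manipulation as in Claim~\ref{sumofdelta} then gives
\[\sum_{t=10d(i-1)+1}^{10di} \E\!\left[\Delta^{(v_t)} \,\middle|\, \clE,\, v_{10d(i-1)+1}=w\right] \;\geq\; \tfrac{9}{10} - (1-a_w) \;=\; a_w - \tfrac{1}{10}.\]
Averaging over $w$ using $p_w := \Pr[v_{10d(i-1)+1}=w]$ and $\Pr[v_{10d(i-1)+1}=w \mid \clE] = p_w a_w / \Pr[\clE]$, the block's contribution to the conditional expectation is at least
\[\frac{1}{\Pr[\clE]}\sum_w p_w a_w\!\left(a_w - \tfrac{1}{10}\right) \;=\; \frac{1}{\Pr[\clE]}\sum_w p_w a_w^2 \;-\; \tfrac{1}{10} \;\geq\; \Pr[\clE] - \tfrac{1}{10} \;\geq\; \tfrac{13}{20},\]
where the penultimate inequality is Jensen's inequality (equivalently Cauchy--Schwarz) $\sum_w p_w a_w^2 \geq (\sum_w p_w a_w)^2 = \Pr[\clE]^2$.

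The main obstacle is that for individual $w$ the event probability $a_w$ can be well below $3/4$, so Claim~\ref{sumofdelta} does not apply off-the-shelf inside each $\clB_w$. This is addressed by keeping the dependence on $a_w$ explicit in the per-vertex lower bound and then recovering the global $\Pr[\clE] \geq 3/4$ via Jensen after averaging with the right weighting $p_w a_w$. A minor subtlety is the possibility that $\clB$ terminates before depth $10d(i-1)+1$; this is harmless under the convention $\Delta^{(\bot)}=1$, which makes the contribution of the corresponding block at least $10d \geq 13/20$ automatically.
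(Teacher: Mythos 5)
Your proof is correct, and in fact it is more rigorous than the paper's own argument at the one place where the paper is terse. Both you and the paper decompose the sum into blocks of length $10d$, and both appeal to the recursive sub-structure property to argue that each subtree rooted at depth $10jd+1$ is an optimal tree for the conditioned distributions, so that Markov's inequality again gives $\Pr[\nq_1 \text{ set to $0$ within $10d$ steps}] \geq 9/10$ inside the subtree.

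Where you add something genuine is in handling the event $\clE$. The paper simply writes ``By Claim~\ref{sumofdelta}, we have that $\sum_{t=1}^{10d} \E[\Delta^{(w_t)} \mid \clE] \geq \frac{13}{20}$'' for the subtree rooted at an arbitrary $w$ at depth $10jd+1$, but Claim~\ref{sumofdelta} requires that the event have probability at least $3/4$ in the relevant probability space, and inside a given subtree the conditional probability $a_w = \Pr[\clE \mid v_{10jd+1}=w]$ need not satisfy $a_w \geq 3/4$ even when $\Pr[\clE]\geq 3/4$ globally. You correctly isolate this issue, re-derive the chain of inequalities in Claim~\ref{sumofdelta} keeping $a_w$ explicit to get the per-vertex bound $a_w - \tfrac{1}{10}$, and then recover the global constant $\tfrac{13}{20}$ by averaging over $w$ with the weights $p_w a_w/\Pr[\clE]$ and invoking Jensen (equivalently Cauchy--Schwarz) in the form $\sum_w p_w a_w^2 \geq (\sum_w p_w a_w)^2 = \Pr[\clE]^2$. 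The arithmetic checks out ($\Pr[\clE] - \tfrac{1}{10} \geq \tfrac{3}{4} - \tfrac{1}{10} = \tfrac{13}{20}$), the weighted average of the possibly-negative lower bounds $a_w - \tfrac{1}{10}$ is still a valid lower bound, and your remark about the $v_t = \bot$ convention covers early termination. The ``induction on $i$'' framing is cosmetic --- like the paper, you are really just summing the per-block bounds --- but the Jensen step is the substantive addition and it closes a real gap in the published argument.
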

Notice that if $\clB$ terminates before making $t$ queries, $v_t=\bot$ and $\Delta^{(v_t)}=1$.
\begin{proof}[Proof of Lemma~\ref{sumofdelta2}]
For $j=1,\ldots, i$, let $w$ be any vertex at depth $10jd+1$. Consider the subtree $\mathsf{T}$ of $\clB$ rooted at $w$. By the recursive sub-structure property of $\clB$, $\mathsf{T}$ is an optimal tree for distributions $\mu_0':=\mu_0 \mid w, \mu_1':=\mu_1 \mid w$. Let $w_t$ be the random vertex at depth $t$ of $\mathsf{T}$, when $\mathsf{T}$ is run on a random input from $\mu \mid w$. By Claim~\ref{sumofdelta}, we have that,
\begin{align}
\sum_{t=1}^{10d} \E[\Delta^{(w_t)} \mid \clE]  \geq \frac{13}{20}. \label{onetree}
\end{align}
In~(\ref{onetree}), $\Delta^{(w_t)}$ is with respect to distributions $\mu'_0 \mid w_t=\mu_0 \mid w_t, \mu'_1 \mid w_t=\mu_1 \mid w_t$. Now, when $w$ is the random vertex $v_{10jd+1}$, $w_t$ is the random vertex $v_{10jd+t}$. Thus from (\ref{onetree}) we have that,
\begin{align}
\sum _{t=10jd+1}^{10(j+1)d} \E[\Delta^{(v_t)} \mid \clE]  \geq \frac{13}{20}. \label{oneslab}
\end{align}
The claim follows by adding (\ref{oneslab}) over $j=0, \ldots, i-1$.
\end{proof}

%%%
\begin{comment}
Let $x =(x^{(i)})_{i=1, \ldots, m} \in \{0,1\}^m$ be a generic input to $\clB'$. Let $v$ be a vertex in $\clB'$, and let $x^{(i)}$ be the variable queried at $v$. Define $\Delta_v:=\left|\Pr_{x \sim \mu_0}[x^{(i)}=0 \mid x \in v]-\Pr_{x \sim \mu_1}[x^{(i)}=0 \mid x \in v]\right|$. Conditioned on the event that at the beginning of an iteration of the \emph{while} loop in Algorithm~\ref{P} the current vertex is $v$ and $\nq_1=1$, the probability that $\nq_1$ is set to $0$ in that iteration is exactly $\Delta_v$.
\end{comment}
%%%

We now finish the proof of Theorem~\ref{maina} by showing that $D^\mu(g) = O(d^2)$. Let $x$ be distributed according to $\mu$, and $\clB$ be run on $x$. Let $\mathsf{BIASED}$ denote the event that in at most $10d^2$ queries, the computation of $\clB$ reaches a vertex $v$ for which $\Pr_{x \sim \mu}[g(x)=0 \mid x \in v]\cdot\Pr_{x \sim \mu}[g(x)=1 \mid x \in v] \leq \frac{1}{9}$. Let $\mathsf{STOP}$ denote the event that $\clB$ terminates after making at most $10d^2$ queries. Let $\clE:=\overline{\mathsf{BIASED} \vee\mathsf{STOP}}$.

Consider the following decision tree $\clB'$: Start simulating $\clB$. Terminate the simulation if one of the following events occurs. The outputs in each case is specified below.
\begin{enumerate}
\item (\emph{Event $\mathsf{STOP}$}) If $\clB$ terminates, terminate and output what $\clB$ outputs. \label{e1}
\item If $10d^2$ queries have been made and the computation is at a vertex $v$, terminate and output $\arg \max_b \Pr[g(x)=b \mid x \in v]$. \label{e2}
\end{enumerate}
By construction, $\clB'$ makes at most $10d^2$ queries in the worst case. We shall show that $\Pr_{x \sim \mu}[\clB'(x)\neq g(x)] \leq \frac{47}{95} < \frac{1}{2}$. This will prove Theorem~\ref{maina}.

We split the proof into the following two cases.
\begin{description}
\item [Case $1$:] $\Pr[\overline{\clE}] \geq \frac{1}{4}$.

First, condition on the event that the computation reaches a vertex $v$ for which $\Pr_{x \sim \mu}[g(x)=0 \mid x \in v]\cdot\Pr_{x \sim \mu}[g(x)=1 \mid x \in v] \leq \frac{1}{9}$ holds. Thus one of $\Pr_{x \sim \mu}[g(x)=0 \mid x \in v]$ and $\Pr_{x \sim \mu}[g(x)=1 \mid x \in v]$ is at most $1/3$. Hence,  $|\Pr_{x \sim \mu}[g(x)=0 \mid x \in v]-\Pr_{x \sim \mu}[g(x)=1 \mid x \in v]| \geq 2/3$. Let $m$ be the random leaf of the subtree of $\clB'$ rooted at $v$ at which the computation ends. The probability that $\clB'$ errs is at most
\begin{align*}
&\E_{x \sim \mu \mid v}\left[\frac{1}{2}-\frac{1}{2}\left|\Pr_{x \sim \mu}[g(x)=0 \mid x \in m]-\Pr_{x \sim \mu}[g(x)=1 \mid x \in m]\right|\right]. \\
& \leq \frac{1}{2}-\frac{1}{2} \left|\E_{x \sim \mu \mid v}\Pr_{x \sim \mu}[g(x)=0 \mid x \in m]-\E_{x \sim \mu \mid v}\Pr_{x \sim \mu}[g(x)=1 \mid x \in m]\right| \\
& \qquad \qquad \qquad \qquad \mbox{\ \ \ \ (By Jensen's inequality)} \\
&=\frac{1}{2}-\frac{1}{2}\left|\Pr_{x \sim \mu}[g(x)=0 \mid x \in v]-\Pr_{x \sim \mu}[g(x)=1 \mid x \in v]\right| \leq \frac{1}{3}.
\end{align*}

Then, condition on the event $\mathsf{STOP}$. The probability that $\clB'$ errs is $0 \leq \frac{1}{3}$.

Thus we have shown that conditioned on $\overline{\clE}$ the probability that $\clB'$ errs is at most $\frac{1}{3}$. Thus the probability that $\clB'$ errs is at most $\frac{1}{4}\cdot \frac{1}{3}+\frac{3}{4}\cdot\frac{1}{2} = \frac{11}{24}<\frac{47}{95}$.
\item[Case $2$:] $\Pr[\overline{\clE}] < \frac{1}{4}$.

By Claim~\ref{sumofdelta2} we have that
\begin{align}
\sum_{t=1}^{10d^2} \E[\Delta^{v^{(t)}} \mid \clE] \geq \frac{13d}{20}. \label{deltabound}
\end{align}
Let $a_i:=(x_i, b_i)$ be the tuple formed by the random input variable $x_i$ queried at the $i$-th step by $\clB'$, and the outcome $b_i$ of the query; if $\clB'$ terminates before $i$-th step, $a_i:=\bot$. Notice that the vertex $v_i$ at which the $i$-th query is made is determined by $(a_1, \ldots, a_{i-1})$ and vice versa. We have,
\begin{align}
&\I(a_1, \ldots, a_{10d^2}:g(x)) \nonumber \\
&= \sum_{i=1}^{10d^2} \I(a_i:g(x) \mid a_1, \ldots, a_{i-1}) \mbox{\ \ \ \ (Chain rule of mutual information)}\nonumber \\
&= \sum_{i=1}^{10d^2} \I(b_i:g(x) \mid v_i) \nonumber \\
& \geq 32 \sum_{i=1}^{10d^2} \E \left[\mathbf{1}_{v_i \neq \bot}\cdot\left[\Pr[g(x)=0 \mid x \in v_i] \cdot \Pr[g(x)=1 \mid x \in v_i] \cdot \Delta^{(v_i)}\right]^2\right] \nonumber \\
&\qquad \qquad \qquad \qquad \mbox{\ \ \ (From Claim~\ref{mutin})} \nonumber \\
&\geq 32 \sum_{i=1}^{10d^2} \Pr[\clE] \cdot \E\left[\left[\Pr[g(x)=0 \mid x \in v_{i-1}] \cdot \Pr[g(x)=1 \mid x \in v_{i-1}] \cdot \Delta^{(v_i)}\right]^2 \mid \clE\right] \nonumber \\
&\qquad \qquad \qquad \qquad \mbox{\ \ \ \ (Conditioned on $\clE, v_i \neq \bot$)} \nonumber \\
&\geq 32 \sum_{i=1}^{10d^2} \frac{3}{4} \cdot \frac{1}{9} \cdot \E[{\Delta^{(v_i)}}^2 \mid \clE] \nonumber \\
&= \frac{8}{3}\sum_{i=1}^{10d^2}   \E[{\Delta^{(v_i)}}^2 \mid \clE] \mbox{\ \ \ \ \ \ (By the assumption $\Pr[\overline{\clE}] \leq \frac{1}{4}$ )} \nonumber \\
&\geq \frac{8}{3} \cdot \frac{1}{10d^2} \left(\sum_{i=1}^{10d^2}  \E[\Delta^{(v_i)} \mid \clE]\right)^2 \mbox{(By Cauchy-Schwarz inequality)} \nonumber \\
&\geq \frac{1}{10}.\mbox{\ \ \ \ (From~(\ref{deltabound}))}\label{infbound}
\end{align}
Hence, from~(\ref{infbound}) we have
\begin{align}
\Hen(g(x) \mid a_1, \ldots a_{v_{10d^2}}) \leq 1-\frac{1}{10}=\frac{9}{10}. \label{enbound}
\end{align}
Let $\cL$ be the set of leaves $\ell$ of $\clB'$ such that $\Hen(g(x) \mid \ell) \leq \frac{19}{20}$. For each $\ell \in \cL$, $\min_b \Pr_{x \sim \mu}[g(x)=b \mid x \in \ell] \leq \frac{2}{5}$. Conditioned on $(a_1, \ldots, a_{10d^2}) \in \cL$, the probability that $\clB'$ errs is at most $\frac{2}{5}$. By \emph{Markov's inequality} and (\ref{enbound}), it follows that $\Pr[(a_1, \ldots, a_{10d^2}) \in \cL] \geq \frac{1}{19}$. Thus $\clB'$ errs with probability at most $\frac{1}{19}\cdot \frac{2}{5}+\frac{18}{19}\cdot \frac{1}{2}=\frac{47}{95}$.
\end{description}
\end{proof}
\section{The Composition Theorem}
\label{comp}
In this section we prove Theorem~\ref{mainb} (restated below).
\mainb*
\begin{proof}
We shall prove that for each distribution $\eta$ on the inputs to $f$, there is a query algorithm $\cA$ making $O(\R(f \circ g^n) /\chi(g))$ queries in the worst case, for which $\Pr_{z \in \nu}[(z,\cA(z)) \in f] \geq \frac{5}{9}$ holds. This will imply the theorem by \emph{Yao's minimax principle}. To this end let us fix a distribution $\eta$ over $\{0,1\}^n$.

Let $\chi(g)=d$. Thus, there is a \emph{hard} pair of distributions $\mu_0, \mu_1$, supported on $g^{-1}(0)$ and $g^{-1}(1)$ respectively, such that for every decision tree $\clB$ that computes $g$, $\chi(\mu_0, \mu_1, g) \geq d$. We will use distributions $\eta, \mu_0$ and $\mu_1$ to set up a distribution $\gamma_\eta$ over the  input space of $f \circ g^n$. For a fixed $z=(z_1, \ldots, z_n) \in \{0,1\}^n$, We recall the distribution $\gamma_z$ over $\left(\{0,1\}^m\right)^n$ from Section~\ref{cc}. $\gamma_z$ is given by the following sampling procedure:
\begin{enumerate}
\item For $i=1, \ldots, n$, sample $x_i=(x_i^{(j)})_{j=1, \ldots, m}$ from $\mu_{z_i}$ independently for each $i$.
\item return $x=(x_i)_{i=1, \ldots, n}$.
\end{enumerate}
Now, let $\gamma_\eta$ be the distribution over $\left(\{0,1\}^m\right)^n$ that is given by the following sampling procedure:
\begin{enumerate}
\item Sample $z =(z_1, \ldots, z_n)$ from $\eta$.
\item Sample $x=(x_i)_{i=1, \ldots, n}$ from $\gamma_z$. Return $x$.
\end{enumerate}
Observe that for each $z, x$ sampled as above, for each $s \in \mathcal{S}$, $(z,s) \in f$ \emph{if and only if} $(x,s) \in f \circ g^n$.

Assume that $\R_{1/3}(f \circ g^n)=t$. Yao's mimimax principle implies that there is a deterministic query algorithm $\cA'$ for inputs from $\left(\{0,1\}^m\right)^n$, that makes at most $t$ queries in the wors case, such that $\Pr_{x \in \gamma_\nu}[(x,\cA'(x)) \in f \circ g^n] \geq \frac{2}{3}$. We will first use $\cA'$ to construct a randomized algorithm $T$ for $f$, whose accuracy is as desired, and for which the expected number of queries made is small.

\begin{algorithm}[!h]\label{T}
\DontPrintSemicolon
\caption{ $T$ on $z$}
%\KwIn{$z \in \{0,1\}^n$}
%\KwOut{$f(z)$ or $\perp$ (abort)}
\For{$1 \leq k \leq n$}
{$\nq_k \gets 1$. \;
$\N_k \gets 0$. \;}
$v \gets $Root of $\cA'$ \ \ \ \ \ \ \ \ \ \ \ \ \ \  \ \ \ \ \ \ \ // Corresponds to $\{0,1\}^m$ \;
\While{$v$ is not a leaf of $\cA'$}
{
Let $\cA'$ query $x_i^{(j)}$ at $v$. \;
\If{$\nq_i = 1$}
{Sample a fresh real number $r \sim [0,1]$ uniformly at random. \label{sampler}\;
\If{$r \leq \min_b \Pr_{x_i \sim \mu_b}[x_i^{(j)}=0 \mid x_i \in v_i]$}
{$v \gets v_0$. \;
}
\ElseIf{$r \geq \max_b \Pr_{x_i \sim \mu_b}[x_i^{(j)}=0 \mid x_i \in v^{(i)}]$}
{$v \gets v_1$. \;
 }
\Else
{$\nq_i \gets 0$. \label{query}\;
Query $z_i$. \;
\If{$r \leq \Pr_{x_i \sim \mu_{z_i}}[x_i^{(j)}=0 \mid x_i \in v^{(i)}]$}
{$v \gets v_0$. \;}
\Else
{$v \gets v_1$. \;}
}
$ \N_i \gets \N_i+1$. \;
\Else{
Sample $b$ from the distribution $\mu_{z_i}$ conditioned on the event $x_i \in v^{(i)}$. \;
$v \gets v_b$. \;}
}
}
\end{algorithm}
$T$, described formally in Algorithm~\ref{T}, is essentially viewing the process $\clP$ for $z, \mu_0, \mu_1, A'$ as a query algorithm runnng on input $z$; an assignment of $0$ to $\nq_i$ corresponds to a query to $z_i$. By Claim~\ref{samedistn}, we have that for each $z \in \{0,1\}^n$, $\Pr[(z, T(z)) \in f]=\Pr_{x \sim \gamma_z}[(x,A'(x)) \in f \circ g^n]$. Thus, $\Pr_{z \sim \eta}[(z, T(z)) \in f]=\Pr_{x \sim \gamma_\eta}[(x,A'(x)) \in f \circ g^n] \geq \frac{2}{3}$.

We now bound the expected number of queries made by $T$ on each $z$. For doing that we consider the following randomized process $Q$ that acts on $z$. Let $\clB$ be an optimal tree for distributions $\mu_0, \mu_1$. $Q$ is described formally in Algorithm~\ref{Q}.
\begin{algorithm}[!h]\label{Q}
\DontPrintSemicolon
\caption{ $Q$ on $z$}
%\KwIn{$z \in \{0,1\}^n$}
%\KwOut{$f(z)$ or $\perp$ (abort)}
Run $T$ on $z$. \;\label{runT}
\For{$1 \leq i \leq n$}
{\If{$\nq_i =1$}
{Run process $\clP$ on $\clB, \mu_0, \mu_1, x_i$ until $\nq_i$ is set to $0$. \; \label{runP}}}
\end{algorithm}
Since $\clB$ computes $g$, process Q is guaranteed to set $\nq_i$ to $0$ for each $i$. In steps~\ref{runT} and~\ref{runP}, the process $\clP$ is run with trees $A'$ and $\clB$, and the trees make queries inside the for loop of $\clP$. These queries can be thought of as being made to an $mn$ bit string $(x_i^{(j)})_{{i=1, \ldots, n}\atop{j=1, \ldots, m}}$. Let the random variable $X_i$ stand for the total number of queries made by these trees in $x_i$. $X=\sum_{i=1}^n X_i$ is the total number of queries in $Q$, i.e., the total number of iterations of the for loop of $\clP$ in all the runs of $\clP$ in $Q$. The next claim bounds $\E X$ from below.
\begin{claim}
\label{dp}
\[\E X \geq nd.\]
\end{claim}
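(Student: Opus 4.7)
I plan to prove the stronger coordinate-wise bound $\E X_i \ge d$ for every $i\in[n]$; summing over $i$ then gives $\E X \ge nd$ by linearity of expectation. The strategy is to identify the combined coordinate-$i$ process that $Q$ executes---the iterations of $\clP$ that query a bit of $x_i$, first during $T$ with $\cA'$ and then during the continuation with $\clB$---with a single run of $\clP$ on a decision tree $\tilde\clB_i$ that computes $g$, and then invoke the hardness of $(\mu_0,\mu_1)$.

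To construct $\tilde\clB_i$, fix $i$ and condition on all randomness of $Q$ that does not touch coordinate $i$: the bits $z_j$ for $j\ne i$, the reals $r$ and bits $b$ sampled in $T$ at iterations that query $x_j$ with $j\ne i$, and the entire randomness of any continuations associated with coordinates other than $i$. Under this conditioning the outcome of every query of $\cA'$ to a non-$i$ coordinate is pinned, so the sub-sequence of $\cA'$'s queries that touch coordinate $i$ is described by a deterministic decision tree $\clC_i$ on the $m$ bits of $x_i$, whose shape and labels are read off from $\cA'$ as a function of the conditioned outcomes. Let $\tilde\clB_i$ be the tree obtained from $\clC_i$ by attaching a copy of $\clB$ to every leaf. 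Since $\clB$ computes $g$, so does $\tilde\clB_i$, and hence by the definition of $\chi(g)$ together with the hardness of $(\mu_0,\mu_1)$ we have $\chi(\mu_0,\mu_1,\tilde\clB_i)\ge d$; that is, the expected number of iterations needed by $\clP$ on $\tilde\clB_i$ to set $\nq_i$ to $0$ is at least $d$.

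Next I would verify that, under this conditioning, the combined coordinate-$i$ process executed by $Q$ is exactly the run of $\clP$ on $\tilde\clB_i,\mu_0,\mu_1,z_i$: traversal of $\clC_i$ corresponds to the iterations of $\cA'$ in $T$ that query bits of $x_i$, while traversal of the attached copy of $\clB$ corresponds to the continuation, with the subcube $v^{(i)}$ accumulated in $T$ carried into the continuation so that $\clP$'s probabilities are computed conditional on the correct subcube. Granted this identification, the number of iterations of the combined process until $\nq_i$ is set to $0$ equals $\clN^{(\tilde\clB_i)}$. Since $X_i$ counts \emph{all} coordinate-$i$ iterations of $\clP$ in $Q$ (including those that may occur after $\nq_i$ has been set), we have the pointwise inequality $X_i\ge \clN^{(\tilde\clB_i)}$, and taking expectations first over the coordinate-$i$ randomness and then over the conditioning yields $\E X_i \ge d$.

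The main obstacle is the identification in the third paragraph: one must carefully check that the subcube restrictions accumulated on coordinate $i$ during $T$ are inherited by the continuation, so that the two phases glue into a single $\clP$-run on $\tilde\clB_i$ rather than two independent runs. This is a direct bookkeeping verification from the definitions of $\clP$, $T$, and $Q$, parallel in spirit to the inductive argument used in Claim~\ref{samedistn}, but it is the step that must be executed with care.
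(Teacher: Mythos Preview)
Your proposal is correct and follows essentially the same approach as the paper. The paper argues by contradiction (if $\E X<nd$ then some $\E X_i<d$; fix the non-$i$ randomness to obtain a single tree $T'$ computing $g$ with $\chi(\mu_0,\mu_1,T')<d$, contradicting the hardness of $\mu_0,\mu_1$), whereas you argue directly that $\E X_i\ge d$ for every $i$ via the same reduction, and your explicit description of $\tilde\clB_i$ as $\clC_i$ with a copy of $\clB$ at each leaf is precisely the paper's $T'$.
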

\begin{proof}
Towards a contradiction assume that $\E X < nd$. Thus there exists an $i$ such that $\E X_i < d$. Notice that this expectation is over the random real numbers sampled in the for loop of $\clP$. Thus, there exists a fixing of those real numbers $r$ that are sampled in those iterations of the for loop of $\clP$ that correspond to queries into $x_j$ for $j \neq i$, such that conditioned on that fixing, $\E X_i < d$. However, under that fixing, process $Q$ is equivalent to process $\clP$ for some deterministic decision tree $T'$ that computes $g(x_i)$ (since $\nq_i$ is set to $0$ with probability $1$), $\mu_0, \mu_1$ and $z_i$. Thus $\E X_i < d$ conditioned on the above-mentioned fixing of randomness contradicts the assumption that $\min_\clB \chi(\clB, \mu_0, \mu_1)=\chi(g)=d$, where the minimum is taken over all deterministic decision tree $\beta$ that computes $g$. 
\end{proof}
Now, let $Y$ denote the size of the random set $\{i \mid \nq_i \mbox{ is set to $0$ in step~\ref{runT} in $Q$}\}$. Now, conditoned on the event $Y=b$, the expected number of queries made in step~\ref{runP} of $Q$ is $(n-b)d=nd-bd$. So under this conditioning the total number of queries $X$ made by $Q$ is at most $t+nd-bd$. Taking expectation over $b$, and using Claim~\ref{dp} we have that
\begin{align}
t + nd - d \cdot \E Y \geq nd \Longrightarrow \E Y \leq \frac{t}{d}. \nonumber
\end{align}
Observing that for each $z$, $Y$ has the same distribution as the number of queries made by $T$ when run on $z$, we conclude that for each $z$, $T$ makes at most $t/d$ queries on expectation. By Markov's inequality, the probability that $T$ makes more than $9t/d$ queries is at most $1/9$. Thus the probabilistic algorithm $\cA''$ obtained by terminating $T$ after $10t/d$ queries computes $f$ with probability at least $2/3 - 1/9=5/9 > 1/2$ on a random input from $\eta$. By fixing the randomness of $\cA'$ appropriately we get a deterministic algorithm $\cA$ of complexity $O(t/d)=O(\R(f \circ g)/\chi(g))$ such that $\Pr_{z \sim \eta}[(z,\cA(z)) \in f] \geq \frac{5}{9}$.
\end{proof}
\paragraph{Acknowledgements.} I thank Rahul Jain for helpful discussions.

This material is based on research supported by the Singapore National Research Foundation under NRF RF Award No. NRF-NRFF2013-13.
\bibliographystyle{plain}
\bibliography{ref}
\end{document}